\let\NAT@parse\undefined
\newcommand{\dif}{\mathop{}\!\mathrm{d}}
\theoremstyle{defn}
\newtheorem{defn}{Definition}
\theoremstyle{assumption}
\newtheorem{assumption}{Assumption}
\theoremstyle{lem}
\theoremstyle{thm}
\newtheorem{thm}{Theorem}
\theoremstyle{remark}
\newtheorem{remark}{Remark}
\begin{document}
	\title{Online inverse reinforcement learning with unknown disturbances}
	
	\author{Ryan Self, Moad Abudia and Rushikesh Kamalapurkar\thanks{The authors are with the School of Mechanical and Aerospace Engineering, Oklahoma State University, Stillwater, OK, USA. {\tt\small \{rself, abudia, rushikesh.kamalapurkar\}@okstate.edu}.}}
	\maketitle
	\begin{abstract} 
		This paper addresses the problem of online inverse reinforcement learning for nonlinear systems with modeling uncertainties while in the presence of unknown disturbances. The developed approach observes state and input trajectories for an agent and identifies the unknown reward function online. Sub-optimality introduced in the observed trajectories by the unknown external disturbance is compensated for using a novel model-based inverse reinforcement learning approach. The observer estimates the external disturbances and uses the resulting estimates to learn the dynamic model of the demonstrator. The learned demonstrator model along with the observed suboptimal trajectories are used to implement inverse reinforcement learning. Theoretical guarantees are provided using Lyapunov theory and a simulation example is shown to demonstrate the effectiveness of the proposed technique. 
		
	\end{abstract}
	\section{Introduction}
	
	Based on the premise that the most succinct representation of the behavior of an entity is its reward structure \cite{SCC.Ng.Russell2000}, this paper aims to recover the reward (or cost) function of an agent by observing the agent performing a task and monitoring its state and control trajectories. The reward function estimation is performed in the presence of modeling uncertainties and unknown disturbances. This process of learning an agent's reward function is known as inverse reinforcement learning (IRL) \cite{SCC.Ng.Russell2000,SCC.Russell1998}.
	
	IRL methods are proposed in \cite{SCC.Ng.Russell2000} and reward function estimation techniques using IRL for problems formulated as Markov Decision Processes (MDP) are shown in \cite{SCC.Abbeel.Ng2004,SCC.Abbeel.Ng2005,SCC.Ratliff.Bagnell.ea2006}. Since solutions to the IRL problems are generally not unique, the maximum entropy principle is developed in \cite{SCC.Ziebart.Maas.ea2008} to help differentiate between the various solutions. In \cite{SCC.Zhou.Bloem.ea2018}, the authors develop a Maximum Causal Entropy IRL technique for infinite time horizon problems where a stationary soft Bellman policy which helps enable the learning of the transition models is utilized. Beyond this, many extensions of IRL include the formulation of feature constructions \cite{SCC.Levine.Popovic.ea2010}, solving IRL using gradient based methods \cite{SCC.Neu.Szepesvari2007}, and game theoretic approaches \cite{SCC.Syed.Schapire2008}, which suggest the possibility of finding solutions that outperform the expert. IRL is also extended to nonlinear problems using Gaussian Processes, such as \cite{SCC.Levine.Popovic.ea2011}.
	
	The aforementioned IRL techniques and inverse optimal control methods \cite{SCC.Kalman1964} are extensively utilized to teach autonomous machines to perform specific  tasks in an \emph{offline} setting \cite{SCC.Mombaur.Truong.ea2010}. However, these \emph{offline} approaches do not have the robustness to uncertainties required for online implementation. Inspired by the success of model-based real-time reinforcement learning methods such as in \cite{SCC.Vamvoudakis.Lewis2010} and \cite{SCC.Wang.Liu.ea2016} and the online IRL/Inverse Optimal Control (IOC) results for linear systems in \cite{SCC.Kamalapurkar2018} and \cite{SCC.Molloy.Ford.ea2018}, this paper presents an IRL technique for nonlinear systems. In this paper, the results of \cite{SCC.Self.Harlan.ea2019a} are extended to address the problem of online IRL in the presence of disturbances by developing a recursive IRL technique.

	The main contribution of this paper is the development of
	a novel method for reward function estimation for an agent
	with unknown dynamics in the presence of disturbances. The
	developed technique in this paper builds on the previous
	work in \cite{SCC.Self.Harlan.ea2019a} where a batch IRL method is utilized that relies
	on optimal demonstrations, and as such, does not consider
	external disturbances affecting the agent being observed. The
	recursive IRL update results in smoother weight estimates
	and admits Lyapunov-based performance guarantees. Addressing
	the complexities resulting for disturbance-induced
	sub-optimality of the demonstrations is one of the major
	technical contributions of this paper. The suboptimal observations
	make model-free IRL methods challenging because
	they are entirely trajectory driven, and in general, require
	either optimal or near optimal observations. The novelty
	in the technique developed in this paper is the use of
	model-based IRL to compensate for the disturbance-induced
	sub-optimality. If dynamic models of the agents under observation
	are unavailable, they need to be learned from
	data. However, the disturbances make system identification
	challenging, and the resulting models are typically poor. To
	overcome this challenge, it is assumed that the observer and
	demonstrator are co-located and as a result, experience the
	same disturbance. One can then learn the disturbances using
	their effects on the observer and use the resulting estimates
	to learn the dynamic model of the agent under observation.
	A model-based IRL method can then be deployed to learn
	the unknown reward function.
	
	The paper is organized as follows: Section \ref{sec:Notation} explains the notation used throughout the paper. Section \ref{sec:Problem Formulation} details the problem formulation and how the additional challenges related to disturbances are addressed. Section \ref{sec:Dist_Estimation} details the disturbance estimator for this method. Section \ref{sec: Param_Est} shows the developed parameter estimator. 
	Section \ref{sec: IRL} explains the IRL algorithm. 
	Section \ref{sec:Sim} shows a simulation example for the proposed method and Section \ref{sec:Conclusion} concludes the paper. 
	
	\section{Notation}\label{sec:Notation}
	The notation $\mathbb{R}^{n}$ represents the $n-$dimensional Euclidean space, and the elements of $\mathbb{R}^{n}$ are interpreted as column vectors, where $\left(\cdot\right)^{T}$ denotes the vector transpose operator. The set of positive integers excluding 0 is denoted by $\mathbb{N}$. For $a\in\mathbb{R},$ $\mathbb{R}_{\geq a}$ denotes the interval $\left[a,\infty\right)$, and $\mathbb{R}_{>a}$ denotes the interval $\left(a,\infty\right)$. If $a\in\mathbb{R}^{m}$ and $b\in\mathbb{R}^{n}$, then $\left[a;b\right]$ denotes the concatenated vector $\begin{bmatrix}a\\
	b
	\end{bmatrix}\in\mathbb{R}^{m+n}$. The notations $\text{I}_{n}$ and $0_{n}$ denote the $n\times n$ identity matrix and the zero element of $\mathbb{R}^{n}$, respectively. Whenever it is clear from the context, the subscript $n$ is suppressed.
	
	\section{Problem Formulation} \label{sec:Problem Formulation}
	Consider two agents, Agent 1 and Agent 2, where Agent 1 is monitoring the behavior of Agent 2. Agent 1 has the following dynamics
	\begin{align}
	\dot{x}_1=f_1(x_1,u_1)+d_1, \label{eq: Unknown Dist1}
	\end{align}
	where $x_1:\mathbb{R}_{\geq T_0}\rightarrow \mathbb{R}^n$ is the state, $u_1:\mathbb{R}_{\geq T_0}\rightarrow \mathbb{R}^m$ is the control, $f_1:\mathbb{R}^n\times\mathbb{R}^m\rightarrow \mathbb{R}^n$ are the dynamics, $d_1:\mathbb{R}_{\geq T_0}\rightarrow \mathbb{R}^n$ is a disturbance acting on Agent 1, and $T_0$ is the initial time. The dynamics for Agent 2 are
	\begin{align}
	\dot{x}_2=f_2(x_2,u_2)+d_2, \label{eq: Unknown Dist2}
	\end{align}
	where $x_2:\mathbb{R}_{\geq T_0}\rightarrow \mathbb{R}^n$ is the state, $u_2:\mathbb{R}_{\geq T_0}\rightarrow \mathbb{R}^m$ is the control, $f_2:\mathbb{R}^n\times\mathbb{R}^m\rightarrow \mathbb{R}^n$ are the dynamics, and $d_2:\mathbb{R}_{\geq T_0}\rightarrow \mathbb{R}^n$ is a disturbance acting on the Agent 2. 
	
	Agent 2 is attempting to follow a policy that minimizes the following performance index
	\begin{equation}
	J(x_0,u(\cdot))= \int_{T_0}^{\infty}r(x(t;x_0,u(\cdot)),u(t))\dif t,\label{eq:reward}
	\end{equation} 
	where $x\left(\cdot;x_0,u(\cdot)\right)$ is the trajectory generated by the optimal controller $u(\cdot)$ for the undisturbed dynamics that minimizes the performance index in \eqref{eq:reward} starting at $x_0$ and beginning at time $T_0$. The main objective of this paper is to estimate the unknown reward function, $r$, in the presence of disturbances and uncertainties in the dynamics.
	
	The following assumptions are used in the analysis of the paper.
	\begin{assumption}
		The disturbances affecting both agents are identical, i.e. $d_1\left(t\right) = d_2\left(t\right)=d(t), \forall t$.
	\end{assumption}
	\begin{assumption}
		The unknown reward function $r$ is quadratic in the control, i.e.,
		\begin{equation}
		r(x,u) = Q(x)+u^TRu,
		\end{equation}
		where $R\in \mathbb{R}^{m\times m}$ is a positive definite matrix, such that $R=\mathrm{diag}([r_1,\cdots,r_m])$, and the function $Q$ can be represented using a neural network as $Q(x) =(W^*_Q)^T\sigma_{Q}(x)+\epsilon_Q(x)$ is a positive definite function, where $ W_{Q}^{*}\coloneqq\left[q_{1},\cdots,q_{L}\right]^{T} $ are ideal reward function weights, $\sigma_{Q}: \mathbb{R}^{n} \to \mathbb{R}^L$ are known continuously differentiable features, and $\epsilon_Q:\mathbb{R}^n\to \mathbb{R}$ is the approximation error.
	\end{assumption}
	\begin{assumption}
		The dynamics for Agent 2 can be expressed as
		\begin{equation}
		\dot{x}_2=f_2^o(x_2,u_2)+\theta_2^T\sigma_2(x_2,u_2)+{d}, \label{eq:NLSPre Integral Form}
		\end{equation}
		where $f_2^o:\mathbb{R}^n\times\mathbb{R}^m\rightarrow \mathbb{R}^n$ denotes the nominal dynamics, $\theta_2^T\sigma_2$ is a parameterized estimate of the uncertain part of the dynamics, $\theta_2 \in \mathbb{R}^{p\times n}$ is a matrix of unknown constant parameters, and $\sigma_2:\mathbb{R}^{n}\times \mathbb{R}^m\to \mathbb{R}^p$ are known features.
	\end{assumption}

	If Agent 1 and Agent 2 are co-located and of similar size then the disturbances affecting them can be reasonably assumed to be equal. Assumption 2 facilitates the IRL problem formulation in Section \ref{sec: IRL}, and Assumption 3 facilitates the parameter estimation in Section \ref{sec: Param_Est}.
	
	A solution to the two-agent IRL problem described above is proposed in the following. 
	
	Due to the unknown disturbance $d$ acting on Agent 2, the observed trajectories corresponding to Agent 2 will no longer be optimal with respect to its unknown performance index. As a result, a purely data-driven implementation of IRL would yield incorrect reward function estimates. Instead, in this paper, the state trajectories for Agent 2 are measured and the reward function is estimated using a model-based approach that compensates for the trajectory deviations. The unknown disturbance, $d$, is estimated by Agent 1 using its known internal model, and Agent 1 implements a parameter estimator that incorporates the disturbance estimates to calculate the unknown parameters in the dynamics of Agent 2. Finally, both the disturbance and parameter estimates are used by Agent 1 to estimate the unknown reward function that Agent 2 is acting with respect to.
	
	The following sections; disturbance estimation, parameter estimation, and inverse reinforcement learning, are performed in parallel and in real-time.
	
	\section{Disturbance Estimation}\label{sec:Dist_Estimation}
	While the IRL method discussed in the following can be developed using any disturbance estimator that results in uniform ultimate boundedness of the disturbance estimation error, the following exponential disturbance estimator (inspired by \cite{SCC.Chen2004}) is used in this paper for ease of exposition. The disturbance estimation is performed only by Agent 1, and for clarity, the subscripts in the dynamics will be omitted in this section.
	
	The unknown disturbance acting on the agents is assumed to be an additive disturbance that is generated from the exogenous linear system
	\begin{equation}
	\dot{\zeta}=A \zeta,\label{eq:Dist_System}
	\end{equation}
	\begin{equation}
	{d}=C \zeta, \label{eq:d Equation}
	\end{equation}
	where $\zeta:\mathbb{R}_{\geq T_0}\rightarrow \mathbb{R}^N, A \in\mathbb{R}^{N\times N},C\in\mathbb{R}^{n\times N},$ and $d:\mathbb{R}_{\geq T_0}\rightarrow \mathbb{R}^n$ is the disturbance.
	
	The disturbance estimator is designed as
	\begin{equation}
	\dot{\hat{\zeta}}=A \hat{\zeta}+K\left(
	\dot{x}- \left(f\left(x,u\right)+ \hat{d}\right)\right),\label{eq:Dist_Est}
	\end{equation}
	and
	\begin{equation}
	\hat{d}=C \hat{\zeta}, \label{eq: dhat Equation}
	\end{equation} 
	where $K\in \mathbb{R}^{N\times n}$ is a gain matrix.
	
	The following theorem utilizes Lyapunov-based arguments to establish exponential convergence of the disturbance estimator.
	
	\begin{thm}
		If $(A-KC)$ is negative definite, then the disturbance estimation error converges exponentially to zero.
	\end{thm}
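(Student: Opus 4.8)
The plan is to reduce the estimator error to an autonomous linear system and then apply a standard quadratic Lyapunov argument. First I would introduce the disturbance-generator state estimation error $\tilde{\zeta} \coloneqq \zeta - \hat{\zeta}$ and the disturbance estimation error $\tilde{d} \coloneqq d - \hat{d} = C\tilde{\zeta}$ (using \eqref{eq:d Equation} and \eqref{eq: dhat Equation}). The crucial observation is that, along the true trajectories of the plant in this section, $\dot{x} = f(x,u) + d$, so the innovation term in \eqref{eq:Dist_Est} collapses exactly: $\dot{x} - \left(f(x,u) + \hat{d}\right) = d - \hat{d} = C\tilde{\zeta}$. This is the one genuinely substantive step; everything else is bookkeeping.

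Subtracting \eqref{eq:Dist_Est} from \eqref{eq:Dist_System} and using this identity gives $\dot{\tilde{\zeta}} = A\zeta - A\hat{\zeta} - K C \tilde{\zeta} = (A - KC)\tilde{\zeta}$, an unforced linear time-invariant system. Next I would take the candidate Lyapunov function $V(\tilde{\zeta}) = \tfrac{1}{2}\tilde{\zeta}^{T}\tilde{\zeta}$, which satisfies $\tfrac{1}{2}\|\tilde{\zeta}\|^{2} = V = \tfrac{1}{2}\|\tilde{\zeta}\|^{2}$ trivially (so $V$ is positive definite and radially unbounded), and differentiate along the error dynamics to obtain $\dot{V} = \tilde{\zeta}^{T}(A-KC)\tilde{\zeta} = \tfrac{1}{2}\tilde{\zeta}^{T}\big((A-KC) + (A-KC)^{T}\big)\tilde{\zeta}$. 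Under the hypothesis that $(A-KC)$ is negative definite, the symmetric part $(A-KC)+(A-KC)^{T}$ has a largest eigenvalue $\lambda_{\max} < 0$, so $\dot{V} \le \lambda_{\max}\|\tilde{\zeta}\|^{2} = 2\lambda_{\max} V$.

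The comparison lemma then yields $V(\tilde{\zeta}(t)) \le V(\tilde{\zeta}(T_0))\, e^{2\lambda_{\max}(t-T_0)}$, hence $\|\tilde{\zeta}(t)\| \le \|\tilde{\zeta}(T_0)\|\, e^{\lambda_{\max}(t-T_0)}$, i.e.\ $\tilde{\zeta} \to 0$ exponentially. Finally, since $\tilde{d} = C\tilde{\zeta}$, we get $\|\tilde{d}(t)\| \le \|C\|\,\|\tilde{\zeta}(T_0)\|\, e^{\lambda_{\max}(t-T_0)}$, establishing exponential convergence of the disturbance estimation error to zero. I do not expect a serious obstacle here; the only points requiring care are (i) making explicit that the innovation $\dot{x} - f(x,u)$ equals the true disturbance $d$ (so $\dot{x}$ is presumed available, or is supplied through an auxiliary filter), and (ii) interpreting ``negative definite'' for the possibly non-symmetric matrix $A-KC$ in the quadratic-form sense so that the bound on $\dot V$ is valid.
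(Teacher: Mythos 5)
Your proposal is correct and follows essentially the same route as the paper: form the error system $\dot{\tilde{\zeta}}=(A-KC)\tilde{\zeta}$ via the innovation identity $\dot{x}-\left(f(x,u)+\hat{d}\right)=d-\hat{d}$, use $V=\tfrac{1}{2}\tilde{\zeta}^{T}\tilde{\zeta}$ to get $\dot{V}=\tilde{\zeta}^{T}(A-KC)\tilde{\zeta}<0$, and pass the exponential rate to $\tilde{d}=C\tilde{\zeta}$. Your added care about the symmetric part of $A-KC$ and the availability of $\dot{x}$ are reasonable refinements the paper leaves implicit.
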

	\begin{proof}
		Define the error for the exogenous linear system as
		\begin{equation}
		\tilde{\zeta} = \zeta - \hat{\zeta}.\label{eq:error}
		\end{equation}
		Consider the positive definite candidate Lyapunov function
		\begin{equation}
		V_d(\tilde{\zeta}) = \frac{1}{2}\tilde{\zeta}^T\tilde{\zeta}. \label{eq:V_disterror}
		\end{equation}
		Taking the time-derivative of \eqref{eq:V_disterror}, using \eqref{eq:Dist_System}, and \eqref{eq:Dist_Est}
		\begin{multline}
		\dot{V}_d(\tilde{\zeta}) = \tilde{\zeta}^T\Big(A\zeta-A\hat{\zeta}-K\Big(
		\dot{x}-\Big(f\left(x,u\right) + \hat{d}\Big)\Big)\Big).\label{eq:VDot_Dist}
		\end{multline}
		Using \eqref{eq: Unknown Dist2} and simplifying, results in
		\begin{equation}
		\dot{V}_d(\tilde{\zeta}) = \tilde{\zeta}^T\left(A\tilde{\zeta}-K\left(d-\hat{d}\right)\right).
		\end{equation}
		Using \eqref{eq:d Equation} and \eqref{eq: dhat Equation}
		\begin{equation}
		\dot{V}_d(\tilde{\zeta})=\tilde{\zeta}^T\left(A-KC\right)\tilde{\zeta}.\label{eq:VDot_DistFinal}
		\end{equation}
		Using \eqref{eq:VDot_DistFinal}, provided $A-KC$ is negative definite, it can be concluded that $\tilde{\zeta}$ converges exponentially to zero. Since $\tilde{d}=C\tilde{\zeta}$, $\tilde{d}$ has the same convergence rate as $\tilde{\zeta}.$
	\end{proof}
	\section{Parameter Estimation}\label{sec: Param_Est}
	A parameter estimator, motivated by the authors' previous work  in \cite{SCC.Kamalapurkar2017a}, is developed in this section. Since parameter estimation is performed only for Agent 2, for clarity, the subscripts for the dynamics will be omitted in this section. 
	
	\subsection{Parameter Estimator}\label{sec: Param Estim}
	Integrating \eqref{eq:NLSPre Integral Form} over the interval $\left[t-T,t\right]$
	for some constant $T\in\mathbb{R}_{>0}$,\footnote{If the integration interval is selected to be too short, there may not be enough information in the vector $X_i$ to achieve accurate parameter estimation. If the integration interval is selected too long, parameter estimates may not be available during transients where they are needed the most. The development of a reasonable heuristic that guides the selection of the integration interval is a topic for future research.}
	\begin{multline}
	x\left(t\right)-x\left(t-T\right)=\int_{t-T}^{t}f^{o}\left(x(\gamma),u(\gamma)\right)\dif\gamma\\+\theta^{T}\int_{t-T}^{t}\sigma\left(x(\gamma),u(\gamma)\right)\dif\gamma+\int_{t-T}^{t}{d}\left(\gamma\right)\dif\gamma.\label{eq:NLSDouble Integral Form}
	\end{multline}
	
	The expression in (\ref{eq:NLSDouble Integral Form}) can be rearranged to form the affine system
	\begin{equation}
	X\left(t\right)=F\left(t\right)+\text{\ensuremath{\theta}}^{T}S\left(t\right)+{D}\left(t\right),\:\forall t\in\mathbb{R}_{\geq T_{0}}\label{eq:NLSDerivative Free Form}
	\end{equation}
	where 
	\begin{equation}
	X\left(t\right):=\begin{cases}
	\begin{gathered}x(t)-x\left(t-T\right),
	\end{gathered}
	& t\in\left[T_{0}+T,\infty\right),\\
	0, & t<T_{0}+T,
	\end{cases}\label{eq:NLSP}
	\end{equation}
	\begin{equation}
	F\left(t\right):=\begin{cases}
	\int_{t-T}^{t}f^{o}\left(x(\gamma),u(\gamma)\right)\dif\gamma, & t\in\left[T_{0}+T,\infty\right),\\
	0, & t<T_{0}+T,
	\end{cases}\label{eq:NLSF}
	\end{equation}
	\begin{equation}
	S\left(t\right):=\begin{cases}
	\int_{t-T}^{t}\sigma\left(x(\gamma),u(\gamma)\right)\dif\gamma, & t\in\left[T_{0}+T,\infty\right),\\
	0, & t<T_{0}+T,
	\end{cases}\label{eq:NLSG}
	\end{equation}
	and
	\begin{equation}
	{D}\left(t\right):=\begin{cases}
	\int_{t-T}^{t}{d}\left(\gamma\right)\dif\gamma,
	& t\in\left[T_{0}+T,\infty\right),\\
	0, & t<T_{0}+T.
	\end{cases}\label{eq:NLSD}
	\end{equation}
	
	The affine error system in (\ref{eq:NLSDerivative Free Form}) motivates
	the adaptive estimation scheme that follows, in which a \emph{concurrent learning} \cite{SCC.Chowdhary2010} technique is developed that utilizes recorded data stored in a history stack to drive parameter estimation.
	
	A history stack, $\mathcal{H}^{PE}$, is a set of data points $\left\{ \left(X_{i},{F}_{i},{S}_{i},\hat{D}_{i}\right)\right\} _{i=1}^{M}$ such that \begin{equation}
	X_{i}={F}_{i}+\text{\ensuremath{\theta}}^{T}{S}_{i}+\hat{D}_{i}+\mathcal{E}_i,\:\forall i\in\left\{ 1,\cdots,M\right\}, \label{eq:NLSHistory Stack Compatibility}
	\end{equation}
	where $\mathcal{E}_i=D_i-\hat{D}_i$, and 
	\begin{equation}
	\hat{D}\left(t\right):=\begin{cases}
	\int_{t-T}^{t}\hat{d}\left(\gamma\right)\dif\gamma,
	& t\in\left[T_{0}+T,\infty\right),\\
	0, & t<T_{0}+T.
	\end{cases}\label{eq:NLSDHat}
	\end{equation}
	$\mathcal{H}^{PE}$ is called \emph{full rank} if there exists a constant $\underline{c}\in\mathbb{R}$ such that \begin{equation} 0<\underline{c}<\lambda_{\min}\left\{ \mathscr{S}\right\} ,\label{eq:NLRank Condition} \end{equation} where the matrix  $\mathscr{S}\in\mathbb{R}^{p\times p}$ is defined as $\mathscr{S}:=\sum_{i=1}^{M}{S}_{i}{S}_{i}^{T}$. The concurrent learning update law to estimate the unknown parameters is then given by
	\begin{equation}
	\dot{\hat{\theta}}=\alpha_{\theta}\Gamma_\theta\sum_{i=1}^{M}{S}_{i}\left(X_{i}-{F}_{i}-\hat{\theta}^{T}{S}_{i}-\hat{D}_{i}\right)^{T},\label{eq:NLSTheta Dynamics}
	\end{equation}
	where $\alpha_{\theta}\in\mathbb{R}_{>0}$ is a constant adaptation gain, and $\Gamma_\theta:\mathbb{R}_{\geq0}\to\mathbb{R}^{p\times p}$ is the least-squares gain updated using the update law
	\begin{equation}
	\dot{\Gamma}_{\theta}=\beta_{\theta}\Gamma_{\theta}-\alpha_{\theta}\Gamma_{\theta}\mathscr{S}\Gamma_{\theta},\label{eq:NLGamma Dynamics}
	\end{equation}
	where $\beta_{\theta}\in\mathbb{R}_{>0}$ is a constant gain. Using arguments similar to \cite[Corollary 4.3.2]{SCC.Ioannou.Sun1996}, it can be shown that provided $\lambda_{\min}\left\{ \Gamma_{\theta}^{-1}\left(0\right)\right\} >0$, the least squares gain matrix satisfies
	\begin{equation}
	\underline{\Gamma}_{\theta}\text{I}_{p}\leq\Gamma_{\theta}\left(t\right)\leq\overline{\Gamma}_{\theta}\text{I}_{p},\label{eq:NLSStaFGammaBound_theta}
	\end{equation}
	where $\underline{\Gamma}_{\theta}$ and $\overline{\Gamma}_{\theta}$ are positive
	constants, and $\text{I}_{p}$ denotes an $p\times p$ identity matrix. If a full rank history stack that satisfies \eqref{eq:NLSHistory Stack Compatibility} is not available a priori, then the data points can be recorded online.
	
	From the Lyapunov analysis in Section \ref{sec: Ana PE}, it is observed that the rate of decay for the parameter estimation error is proportional to the minimum singular value of $\mathscr{S}$. Therefore, to promote faster convergence for the parameter estimates, a minimum singular value maximization algorithm is developed. At each time $t$, the algorithm takes the current new data point, $\left(X^{*},{F}^{*},{S}^{*},\hat{D}^*\right)$, and checks if replacing the new data point with any data point currently in the history stack increases the minimum singular value. If the new data point does increase the minimum singular value, that is,
	\begin{equation}
	\lambda_{\min}\!\left(\!\sum_{i\neq j}{S}_{i}{S}_{i}^{T}\!+\!{S}_{j}{S}_{j}^{T}\!\right)\!<\frac{\!\lambda_{\min}\!\left(\!\sum_{i\neq j}{S}_{i}{S}_{i}^{T}\!+\!{S}^{*}{S}^{*T}\!\right)}{\left(1+\psi\right)},\label{eq:NLSsmax}
	\end{equation}
	where $\lambda_{\min}$ represents the minimum singular value of a matrix and $\psi$ is a positive constant, then
	the new data point replaces the data point currently in the $\mathcal{H}^{PE}$ that results in the largest increase in the minimum singular value, if not the new point is discarded.
	
	Using Lyapunov arguments, it can be shown (see Section \ref{sec: Ana PE}) that the parameter estimation error is directly related to the error $\mathcal{E}_i$ in \eqref{eq:NLSHistory Stack Compatibility}. Using the fact that newer values of $\hat{D}_i$ result in smaller $\mathcal{E}_i$ due to the exponential convergence of the disturbance estimates, a purging algorithm is developed to eliminate inaccurate data from $\mathcal{H}^{PE}$.
	
	The algorithm maintains two history stacks, a main history stack and a transient history stack, labeled $\mathcal{H}^{PE}$ and $\mathcal{G}^{PE}$, respectively. As soon as $\mathcal{G}^{PE}$ is full and sufficient time has elapsed since the last purge (see Section \ref{sec: Ana PE}), $\mathcal{H}^{PE}$ is emptied and $\mathcal{G}^{PE}$ is copied into $\mathcal{H}^{PE}$.
	
	\subsection{Analysis}\label{sec: Ana PE}
	A Lyapunov based analysis, summarized in the following theorem, is performed to show convergence for the parameter estimator developed in Section \ref{sec: Param Estim}.
	\begin{remark}
		To facilitate the following Lyapunov analysis, the dynamics for the parameter estimation error can be expressed as
		\begin{equation}
		\dot{\tilde{\theta}}=-\alpha_\theta\Gamma_\theta\mathscr{S}\tilde{\theta}-\alpha_\theta\Gamma_\theta\sum_{i=1}^{M}S_{i}\mathcal{E}_{i},\label{eq:thetaerrordyns}
		\end{equation}
		by using \eqref{eq:NLSHistory Stack Compatibility} and \eqref{eq:NLSTheta Dynamics}, along with the error being defined as $\tilde{\theta}=\theta-\hat{\theta}$.
	\end{remark}
The stability result is summarized in the following theorem.
	\begin{thm}\label{TM: PE}
		Provided the sequences of history stacks, $\mathcal{H}_1^{PE},\mathcal{H}_2^{PE},\ldots,$ are uniformly full rank\footnote{The authors definition of uniformly full rank history stacks $\mathcal{H}^{PE}_s$ requires a constant lower bound on $\underline{c}$ in \eqref{eq:NLRank Condition} $\forall s\in\mathbb{N}.$} and $\tilde{d}$ converges to zero exponentially, then for time intervals $[T_s,T_{s+1}) \ \forall s\in\mathbb{N}$, as $s \to \infty, \Vert\tilde{\theta}(T_s)\Vert \to 0$.
	\end{thm}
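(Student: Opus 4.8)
The plan is to derive a single-interval Lyapunov estimate, show that its forcing term decays with the interval index $s$, and then close the loop with a discrete-time contraction over the purge instants $\{T_s\}$. For the per-interval estimate I would take the continuous, positive-definite function $V_\theta(\tilde\theta,t):=\tfrac12\,\mathrm{tr}\!\left(\tilde\theta^{T}\Gamma_\theta^{-1}(t)\tilde\theta\right)$, which by \eqref{eq:NLSStaFGammaBound_theta} obeys $\tfrac{1}{2\overline\Gamma_\theta}\Vert\tilde\theta\Vert^{2}\le V_\theta\le\tfrac{1}{2\underline\Gamma_\theta}\Vert\tilde\theta\Vert^{2}$ uniformly in $t$, and which is continuous across every purge because $\Gamma_\theta$ and $\tilde\theta$ are continuous there (only $\mathscr{S}$, and hence $\dot\Gamma_\theta$, jumps). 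Differentiating along \eqref{eq:thetaerrordyns} and substituting $\tfrac{d}{dt}\Gamma_\theta^{-1}=-\beta_\theta\Gamma_\theta^{-1}+\alpha_\theta\mathscr{S}$, which follows from \eqref{eq:NLGamma Dynamics}, the indefinite quadratic terms cancel and leave
\begin{equation*}
\dot V_\theta = -\tfrac{\alpha_\theta}{2}\,\mathrm{tr}\!\left(\tilde\theta^{T}\mathscr{S}\tilde\theta\right) - \beta_\theta V_\theta - \alpha_\theta\,\mathrm{tr}\!\left(\tilde\theta^{T}\sum_{i=1}^{M}S_{i}\mathcal{E}_{i}\right).
\end{equation*}
Bounding $\mathrm{tr}(\tilde\theta^{T}\mathscr{S}\tilde\theta)\ge\underline c\Vert\tilde\theta\Vert^{2}$ via the uniform full-rank condition \eqref{eq:NLRank Condition}, using \eqref{eq:NLSStaFGammaBound_theta}, and applying Young's inequality to the cross term, this collapses on $[T_s,T_{s+1})$ to $\dot V_\theta\le -k\,V_\theta+\iota_s$, where $k>0$ depends only on $\alpha_\theta,\beta_\theta,\underline c,\underline\Gamma_\theta$ — uniform in $s$ by the uniform-full-rank hypothesis — and $\iota_s$ is proportional to $\sup_{t\in[T_s,T_{s+1})}\bigl\Vert\sum_{i}S_{i}\mathcal{E}_{i}\bigr\Vert^{2}$.

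The second step is to prove $\iota_s\to 0$. From \eqref{eq:NLSHistory Stack Compatibility}, \eqref{eq:NLSD}, and \eqref{eq:NLSDHat}, the residual of the $i$-th datum is $\mathcal{E}_i = D_i-\hat D_i = \int_{t_i-T}^{t_i}\tilde d(\gamma)\dif\gamma$, where $t_i$ is the recording time of that datum, so $\Vert\mathcal{E}_i\Vert\le T\sup_{\gamma\ge t_i-T}\Vert\tilde d(\gamma)\Vert$. By hypothesis $\tilde d$ converges to zero exponentially, say $\Vert\tilde d(\gamma)\Vert\le\kappa_0 e^{-\kappa_1(\gamma-T_0)}$, hence $\Vert\mathcal{E}_i\Vert\le\kappa_0 T e^{-\kappa_1(t_i-T-T_0)}$. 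The purging mechanism is precisely what makes this shrink with $s$: right after the $s$-th purge the main stack is a copy of the transient stack $\mathcal{G}^{PE}$, every datum of which was collected after the previous purge, and since each purge is permitted only after a positive minimum elapsed time, every $t_i$ for data in the stack active on $[T_s,T_{s+1})$ satisfies $t_i\ge\tau_s$ with $\tau_s\to\infty$. Together with a uniform bound $\Vert S_i\Vert\le\bar S$ — which follows from continuity of $\sigma$ in \eqref{eq:NLSG} and boundedness of the observed state and control over the integration windows — this gives $\iota_s\le c_1 e^{-2\kappa_1(\tau_s-T-T_0)}\to 0$.

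The third step stitches the intervals together. Integrating $\dot V_\theta\le -k V_\theta+\iota_s$ over $[T_s,T_{s+1})$ and using the minimum dwell time $T_{s+1}-T_s\ge\Delta_{\min}>0$ yields $V_\theta(T_{s+1})\le\rho\,V_\theta(T_s)+\iota_s/k$ with $\rho:=e^{-k\Delta_{\min}}\in(0,1)$, and continuity of $V_\theta$ at $T_{s+1}$ means there is no reset term. Unrolling gives $V_\theta(T_s)\le\rho^{\,s}V_\theta(T_0)+\tfrac1k\sum_{j=0}^{s-1}\rho^{\,s-1-j}\iota_j$; since $\rho\in(0,1)$ and $\iota_j\to 0$, the right-hand side (a vanishing transient plus a summable geometric kernel convolved with a null sequence) tends to $0$, so $V_\theta(T_s)\to 0$, and the lower bound on $V_\theta$ delivers $\Vert\tilde\theta(T_s)\Vert\to 0$; the same comparison estimate also shows $\Vert\tilde\theta\Vert$ is bounded on each interval.

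I expect the main obstacle to be the bookkeeping behind $\iota_s\to 0$ rather than the Lyapunov computation itself: one must check that the singular-value-maximizing swap rule \eqref{eq:NLSsmax} operating inside an interval preserves the uniform constants $\underline c$ and $\bar S$ as well as the ``all data recent'' property used to bound $\mathcal{E}_i$, and that the resulting piecewise-constant $\mathscr{S}$ — and hence the hybrid $V_\theta$ — induces no upward jumps at swap or purge instants, so that the single comparison estimate remains valid over an entire interval. A secondary point is justifying the uniform bound $\bar S$ on $\Vert S_i\Vert$, which presumes boundedness of Agent 2's state and control over the relevant windows; this should be argued from Agent 2's optimality with respect to the positive-definite cost of Assumption 2 together with boundedness of the disturbance in \eqref{eq:Dist_System} and \eqref{eq:d Equation}.
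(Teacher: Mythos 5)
Your proof is correct and follows the paper's argument through the entire Lyapunov computation: same candidate function (your trace form is in fact the proper way to write it, since $\tilde{\theta}$ is a $p\times n$ matrix and the paper's $\tfrac{1}{2}\tilde{\theta}^T\Gamma_\theta^{-1}\tilde{\theta}$ is not a scalar), same differential inequality of the form \eqref{eq: Differential V}, and the same interval-by-interval recursion over the purge instants. Where you genuinely diverge is in closing the recursion $\overline{V}_{s+1}\leq \rho\,\overline{V}_s+\iota_s/k$: the paper imposes the auxiliary growth condition \eqref{eq:ref}, $B_{s+1}>2B_s e^{-A\mathcal{T}}$, argues by induction that $\overline{V}_{s+1}\leq 2B_{s+1}/A$, and justifies the condition in a separate remark by artificially inflating the bounds; you instead unroll the recursion to $V_\theta(T_s)\leq\rho^{s}V_\theta(T_0)+\tfrac{1}{k}\sum_{j=0}^{s-1}\rho^{s-1-j}\iota_j$ and invoke the standard fact that a geometric kernel convolved with a null sequence is null. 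Your route is cleaner and strictly more general: it needs no constraint relating the decay rate of the forcing terms to the contraction rate, and it makes the paper's Remark on \eqref{eq:ref} unnecessary. You also supply two justifications the paper only gestures at: the explicit bookkeeping behind $\iota_s\to 0$ (each post-purge stack contains only data recorded after the previous purge, so the recording times diverge and the exponential decay of $\tilde{d}$ drives $\mathcal{E}_i=\int_{t_i-T}^{t_i}\tilde{d}(\gamma)\dif\gamma$ to zero uniformly over the active stack; the paper merely asserts the strict monotonicity \eqref{eq: Decreasing B}, which is neither needed nor obviously true under the swap rule), and the continuity of $V_\theta$ across purge and swap instants, without which the per-interval comparison estimates would not chain. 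Your closing caveats about verifying that the swap rule \eqref{eq:NLSsmax} preserves $\underline{c}$, $\overline{S}$, and data recency are well placed, though note that the swap only ever inserts the current (hence newest) data point, so recency can only improve.
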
	
	\begin{proof}
		Consider the candidate Lyapunov function
		\begin{equation}
		V_\theta(\tilde{\theta},t)=\frac{1}{2}\tilde{\theta}^T\Gamma_\theta^{-1}(t)\tilde{\theta}. \label{eq:V_estimation_theta}
		\end{equation}
		Using the bounds in \eqref{eq:NLSStaFGammaBound_theta}, the candidate Lyapunov function satisfies 
		\begin{equation}
		\frac{1}{\overline{\Gamma}_\theta}\left\Vert \tilde{\theta}\right\Vert ^{2}\leq V_\theta\left(\tilde{\theta},t\right)\leq\frac{1}{\underline{\Gamma}_\theta}\left\Vert \tilde{\theta}\right\Vert ^{2}.\label{eq:NLSCLNoXDotVBounds-1_theta}
		\end{equation}
		The time-derivative of \eqref{eq:V_estimation_theta} results in
		\begin{equation}
		\label{eqn:Dot_V_estimation_theta}
		\dot{V}_\theta(\tilde{\theta},t) = \tilde{\theta}^T \Gamma_\theta^{-1}(t) \dot{\tilde{\theta}} + \frac{1}{2}\tilde{\theta}^T \dot{\Gamma}_\theta^{-1}(t) {\tilde{\theta}}.
		\end{equation}
		Using (\ref{eq:NLSTheta Dynamics}) and (\ref{eq:NLGamma Dynamics}), along with the identity $\dot{\Gamma}^{-1}=-\Gamma^{-1}\dot{\Gamma}\Gamma^{-1}$, $\dot{V}_\theta$ can be expressed as
		\begin{equation*}
		\label{eq:Dot_V_estimation_final}
		\dot{V}_\theta(\tilde{\theta},t)= - \frac{1}{2}\alpha_{\theta} \tilde{\theta}^T  \mathscr{S}  \tilde{\theta} -\frac{1}{2} \beta_\theta\tilde{\theta}^T\Gamma_\theta^{-1}(t){\tilde{\theta}}-\alpha_\theta\tilde{\theta}^T\sum_{i=1}^{M}S_{i}\mathcal{E}_{i}^T.
		\end{equation*}
		Using the Cauchy-Schwartz inequality, and bounds in (\ref{eq:NLRank Condition}) and (\ref{eq:NLSStaFGammaBound_theta}), $\dot{V}_\theta$ can be bounded by
		\begin{equation}
		\label{eq:Dot_V_estimation_bound2_theta}
		\dot{V}_\theta(\tilde{\theta},t)\leq -\frac{1}{2}\left(\alpha_{\theta}\underline{c} + \frac{\beta_\theta}{\overline{\Gamma}_\theta}\right)\left\Vert\tilde{\theta}\right\Vert^2 +\alpha_\theta\left\Vert\tilde{\theta}\right\Vert\sum_{i=1}^{M}\left\Vert S_{i}\right\Vert\left\Vert\mathcal{E}_{i}\right\Vert.
		\end{equation}
	Since the states and controls are bounded, $\left\Vert S_{i}\right\Vert$ is bounded for all $i$. The upper bound is defined as
	\begin{equation} \label{eq: S Bound}
	\overline{S}:=M\max\{S_i\}, \forall i \in [1,\ldots,M].
	\end{equation}
	Using this upper bound and Young's Inequality, $\dot{V}_\theta$ becomes
	\begin{equation}
	\label{eq: Differential V}
	\dot{V}_\theta(\tilde{\theta},t)\leq -AV_\theta\left(\tilde{\theta},t\right) +B,
	\end{equation}
	where $A$ and $B$ are defined as
	\begin{equation}
	A:=\frac{\underline{\Gamma}_\theta}{4}\left(\alpha_{\theta}\underline{c} + \frac{\beta_\theta}{\overline{\Gamma}_\theta}\right),
	\end{equation}
	\begin{equation}
	B:=\frac{(\alpha_\theta\overline{S}\sum_{i=1}^{M}\left\Vert\mathcal{E}_i\right\Vert)^2}{\alpha_{\theta}\underline{c}+\nicefrac{\beta_{\theta}}{\overline{\Gamma}_\theta}}.
	\end{equation}
	Due to the purging of the data to remove erroneous estimates $\hat{d}$ from $\mathcal{H}^{PE}$, further analysis is needed to show parameter convergence. Let the purging instances be defined as $T_1,T_2,\ldots$ that maintain a minimum dwell time, $\mathcal{T}$, such that $T_{s+1}-T_s\geq\mathcal{T}>0, \ \forall s\in\mathbb{N}$.
	
	Solve equation $\eqref{eq: Differential V}$ over any time interval $[T_s,T_{s+1})$, yields
	\begin{equation}\label{eq: V Equation}
	\overline{V}_{s+1}\leq \overline{V}_se^{-A\left(t-T_s\right)}+\frac{B_{s+1}}{A},
\end{equation}
	where $\overline{V}_s\geq\left\Vert V_\theta\left(\tilde{\theta}\left(T_{s}\right),T_{s}\right)\right\Vert$ and $B_{s+1}$ denotes the value of $B$ over interval $[T_s,T_{s+1})$. Due to the exponentially decreasing error term $\left\Vert\mathcal{E}_i\right\Vert$, it can be seen that
	\begin{equation}\label{eq: Decreasing B}
	B_s> B_{s+1}, \forall s = 1,2,\ldots,
	\end{equation}
	and $\lim\limits_{s\to\infty}B_s=0.$ Furthermore, the dwell time condition results in the bound
	\begin{align*}
	\overline{V}_{s+1}\leq \overline{V}_se^{-A\mathcal{T}}+\frac{B_{s+1}}{A}, \forall s = 0,1,2,3,\ldots,
	\end{align*}
	If the bounds $B_s$ are selected so that 
	\begin{equation}\label{eq:ref}
	B_{s+1}>2B_se^{-A\mathcal{T}},\forall s = 0,1, \ldots,
	\end{equation} then
	\begin{equation}
	\overline{V}_{s+1}\leq \frac{2B_{s+1}}{A}, \forall s = 0,1,2,\ldots,
	\end{equation}
	where $B_0:=\frac{A\overline{V_0}}{2}$. As a result, $\lim\limits_{s\to\infty}\overline{V}_s=0$. It can further be concluded that $\left\Vert\tilde{\theta}\left(T_s\right)\right\Vert\to0$ as $s\to\infty$.

	\end{proof}
		\begin{remark}
		There is no loss of generality in assuming \eqref{eq:ref} since the bounds $B_s$ can be artificially inflated to meet \eqref{eq:ref}.
	\end{remark}
\section{Inverse Reinforcement Learning}\label{sec: IRL}

	The formulation of IRL in the following two sections closely follows the authors' previous work in \cite{SCC.Self.Harlan.ea2019a}. In addition, IRL is performed only on Agent 2, and the subscripts for the dynamics are omitted in the next sections.
	\subsection{Inverse Bellman Error}\label{sec:IBE}
	Under the premise that Agent 2 implements a feedback controller that would be optimal in a disturbance-free environment, the state and control trajectories, $x(\cdot)$ and $u(\cdot)$, satisfy the Hamilton-Jacobi-Bellman equation
	\begin{equation}
	H\left(x\left(t\right),\nabla_{x}\left(V^{*}\left(x\left(t\right)\right)\right)^{T},u\left(t\right)\right)=0,\forall t\in \mathbb{R}_{\geq 0},\label{eq:inverse HJB}
	\end{equation}
	where the unknown optimal value function is $ V^{*}:\mathbb{R}^{n}\to \mathbb{R} $ and $H:\mathbb{R}^n\times \mathbb{R}^n \times \mathbb{R}^m \to \mathbb{R}$ is the Hamiltonian, defined as $H(x,p,u):=p^Tf(x,u)+r(x,u)$. The goal of IRL is to accurately estimate the reward function, $ r $. To aid in the estimation of the reward function, let $ \hat{V}:\mathbb{R}^{n}\times \mathbb{R}^{P} \to \mathbb{R}$, $ \left(x,\hat{W}_{V}\right)\mapsto \hat{W}_{V}^{T}\sigma_{V}\left(x\right)+\epsilon_V(x) $ be a parameterized estimate of the optimal value function $V^{*}$, where $ \hat{W}_{V}\in \mathbb{R}^{P} $ are the estimates of the ideal value function weights $W_V^*$, $ \sigma_{V}:\mathbb{R}^{n}\to\mathbb{R}^{P} $ are known continuously differentiable features, and $\epsilon_V:\mathbb{R}^n\to\mathbb{R}$ is the resulting approximation error. Using $ \hat{\theta} $, $ \hat{W}_{V} $, $ \hat{W}_{Q} $, and $ \hat{W}_{R} $, which are the estimates of $ \theta $, $ W_{V}^{*} $, $ W_{Q}^{*} $, and $ W_{R}^{*}\coloneqq\left[r_{1},\cdots,r_{m}\right]^{T} $, respectively, in \eqref{eq:inverse HJB}, the inverse Bellman error $ \delta^{\prime}:\mathbb{R}^{n}\times \mathbb{R}^{m}\times\mathbb{R}^{L+P+m}\times \mathbb{R}^{p}\to\mathbb{R} $ is obtained as
	\begin{align}
	\delta^{\prime}\left({x},u,\hat{W},\hat{\theta}\right)=&\hat{W}_{V}^{T}\nabla_{x}\sigma_{V}\left({x}\right) \ \hat{Y}({x},u,\hat{\theta})+\hat{W}_{Q}^{T}\sigma_{Q}\left({x}\right)\nonumber\\
	&+\hat{W}_{R}^{T}\sigma_{u}\left(u\right),\label{eq: IBE}
	\end{align}where $ \sigma_{u}\left(u\right)\coloneqq\left[u_{1}^{2},\cdots,u_{m}^{2}\right]$ and $\hat{Y}({x},u,\hat{\theta})=\left[f^o({x},u)+\hat{g}({x},u,\hat{\theta})\right]$ where $\hat{g}({x},u,\hat{\theta}):=\hat{\theta}^T\sigma({x,u})$ from (\ref{eq:NLSPre Integral Form}). Rearranging, \eqref{eq: IBE} becomes \begin{equation}
	\delta^{\prime}\left({x},u,\hat{W}^{\prime},\hat{\theta}\right)=\left(\hat{W}^{\prime}\right)^{T}\sigma^{\prime}\left({x},u,\hat{\theta}\right),\label{eq:inverse BE}
	\end{equation}where $ \hat{W}^{\prime} \coloneqq \left[\hat{W}_{V};\hat{W}_{Q};\hat{W}_{R}\right] $ and $ \sigma^{\prime}\left({x},u,\hat{\theta}\right)\coloneqq\left[\nabla_{x}\sigma_{V}\left({x}\right)\hat{Y}({x},u,\hat{\theta});\sigma_{Q}\left({x}\right);\sigma_{u}\left(u\right)\right] $.
	
	\subsection{Inverse Reinforcement Learning Formulation} \label{sec: IRL Form}
	Using the formulation of the inverse Bellman error in Section \ref{sec:IBE}, and control signals, trajectories, and parameter estimates stored in a history stack, denoted as $\mathcal{H}^{IRL},$ the inverse Bellman error, evaluated at time instances $t_1,t_2,\ldots, t_N$ can be formulated into matrix form 
	\begin{equation}
	\Delta^{\prime} = \hat{\Sigma}^{\prime} \hat{W}^{\prime},\label{homogeneous}
	\end{equation}where $ \Delta^{\prime}\coloneqq\left[\delta^{\prime}_{t}\left(t_{1}\right);\cdots;\delta^{\prime}_{t}\left(t_{N}\right)\right] $, $ \delta^{\prime}_{t}\left(t\right)\coloneqq \delta^{\prime}\left({x}\left(t\right),u\left(t\right),\hat{W}^{\prime},\hat{\theta}\left(t\right)\right)$, and $ \hat{\Sigma}^{\prime}\coloneqq\left[\left(\hat{\sigma}_{t}^{\prime}\right)^{T}\left(t_{1}\right);\cdots;\left(\hat{\sigma}_{t}^{\prime}\right)^{T}\left(t_{N}\right)\right] $. The IRL problem can then be solved by finding the solution of the linear system in \eqref{homogeneous}. Since \eqref{homogeneous} is a homogeneous system of linear equations, it can only be solved up to a scaling factor. Since optimal state and control trajectories are invariant with respect to scaling of the cost function, the scaling ambiguity in \eqref{homogeneous} is to be expected. Since optimal control behaviours are scale-invariant, there is no loss of generality in resolving the scale ambiguity by assigning a fixed known value to one of the reward function weights.
	
	Taking the first element of $ \hat{W}_{R} $ to be known, the inverse BE in \eqref{eq:inverse BE} can then be expressed as \begin{equation}
	\delta^{\prime}\left({x},u,\hat{W},\hat{\theta}\right)=\hat{W}^{T}\sigma^{\prime\prime}\left({x},u,\hat{\theta}\right) + r_{1}\sigma_{u1}\left(u\right),
	\end{equation}where $ \hat{W} \coloneqq \left[\hat{W}_{V};\hat{W}_{Q};\hat{W}_{R}^{-}\right],$ the vector $ \hat{W}_{R}^{-} $ denotes $ \hat{W}_{R} $ with the first element removed, $\sigma_{ui}\left(u\right) $ denotes the $ i $\textsubscript{th} element of the vector $ \sigma_{u}\left(u\right) $, the vector $ \sigma_{u}^{-} $ denotes $ \sigma_{u} $ with the first element removed, and $ \sigma^{\prime\prime}\left({x},u,\hat{\theta}\right)\coloneqq\left[\nabla_{x}\sigma_{V}\left({x}\right)\hat{Y}({x},u,\hat{\theta});\sigma_{Q}\left({x}\right);\sigma_{u}^{-}\left(u\right)\right] $.
	
	The closed-form nonlinear optimal controller corresponding to the reward structure in (\ref{eq:reward}) provides the relationship\begin{multline}
	-2Ru\left(t\right)=\left(g^{\prime}(x(t))\right)^{T}(\nabla_{x}\sigma_{V}\left(x\left(t\right)\right))^TW_{V}^{*}\\+\left(g^{\prime}(x(t))\right)^{T}\nabla_{x}\epsilon\left(x\left(t\right)\right),
	\end{multline}
	which can be expressed as \begin{align*}
	-2r_{1}u_{1}\left(t\right) + \Delta_{u1}&=\sigma_{g1}\hat{W}_{V}\\
	\Delta_{u^{-}}&=\sigma_{g}^{-}\hat{W}_{V}+2\text{diag}\left(u_{2},\cdots,u_{m}\right)\hat{W}_{R}^{-},
	\end{align*}where $g^{\prime}(x):=\nabla_uf(x,u)$, $ \sigma_{g1} $ and $ \Delta_{u_{1}}$ denote the first rows and $ \sigma_{g}^{-}  $ and $ \Delta_{u^{-}} $ denote all but the first rows of $\sigma_{g}(x)\coloneqq \left(g^{\prime}(x)\right)^{T}(\nabla_{x}\sigma_{V}\left(x\right))^T $ and $ \Delta_u(x) \coloneqq \left(g^{\prime}(x)\right)^{T} \nabla_{x}\epsilon(x)$, respectively, and $ R^{-}\coloneqq\text{diag}\left(\left[r_{2},\cdots,r_{m}\right]\right) $. For simplification, let $ \sigma\coloneqq\left[\sigma^{\prime\prime},\, \begin{bmatrix}
	\sigma_{g}^{T}\\\Theta
	\end{bmatrix}\right]$, where \[ \Theta\coloneqq\left[0_{m\times 2n}, \,\,\begin{bmatrix}
	0_{1\times m-1}\\2\text{diag}\left(\left[u_{2},\cdots,u_{m}\right]\right)
	\end{bmatrix}\right]^{T}. \]
	
	Updating matrix form in (\ref{homogeneous}) by removing the known reward weight results in the linear system \begin{equation}
	-\Sigma_{u1}=\hat{\Sigma}\hat{W}-\Delta^{\prime}, \label{eq: Weight Linear System}
	\end{equation}where $ \hat{\Sigma}\coloneqq\left[\hat{\sigma}_{t}^{T}\left(t_{1}\right);\cdots;\hat{\sigma}_{t}^{T}\left(t_{N}\right)\right] $, and $ \Sigma_{u1}\coloneqq\left[\sigma_{u1}^{\prime}\left(u\left(t_{1}\right)\right);\cdots;\sigma_{u1}^{\prime}\left(u\left(t_{N}\right)\right)\right] $, where $ \hat{\sigma}_{t}\left(\tau\right)\coloneqq\sigma\left({x}\left(\tau\right),u\left(\tau\right),\hat{\theta}\left(\tau\right)\right) $, $ \sigma_{u1}^{\prime}(\tau)\coloneqq \left[r_1\sigma_{u1}(\tau);2r_{1}u_{1}(\tau);0_{\left(m-1\right)\times 1}\right]$. 
	
	At any time instant $ t $, provided the data stored in the history stack $\mathcal{H}^{IRL}$ satisfies \begin{equation}
	\textnormal{rank}\left(\hat{\Sigma}\right)=L+P+m-1,\label{eq:Rank Condition}
	\end{equation}
then the recursive update law 
\begin{equation}
\dot{\hat{W}}=\alpha\Gamma\hat{\Sigma}^T\left(-\hat{\Sigma}\hat{W}-\Sigma_{u1}\right),\label{eq:W Dynamics}
\end{equation}
is shown to result in UUB estimation of the weights $W^*$. In \eqref{eq:W Dynamics}, $\alpha\in\mathbb{R}_{>0}$ is a constant adaptation gain and $\Gamma:\mathbb{R}_{\geq0}\to\mathbb{R}^{\left(L+P+m-1\right)\times \left(L+P+m-1\right)}$ is the least-squares gain updated using the update law
\begin{equation}
\dot{\Gamma}=\beta\Gamma-\alpha\Gamma\hat{\Sigma}^T\hat{\Sigma}\Gamma,\label{eq:WGamma Dynamics}
\end{equation}
where $\beta\in\mathbb{R}_{>0}$ is the forgetting factor. Using arguments similar to \cite[Corollary 4.3.2]{SCC.Ioannou.Sun1996}, it can be shown that provided $\lambda_{\min}\left\{ \Gamma^{-1}\left(0\right)\right\} >0$, the least squares gain matrix satisfies
\begin{equation}
\underline{\Gamma}\text{I}_{L+P+m-1}\leq\Gamma\left(t\right)\leq\overline{\Gamma}\text{I}_{L+P+m-1},\label{eq:NLSStaFGammaBound}
\end{equation}
where $\underline{\Gamma}$ and $\overline{\Gamma}$ are positive
constants, and $\text{I}_{n}$ denotes an $n\times n$ identity matrix.
	\subsection{Analysis}\label{sec:Stability}
A Lyapunov based analysis is performed to show convergence for the IRL method in Section \ref{sec: IRL Form}.

	\begin{defn}
	A sequence of history stacks, $\mathcal{H}^{IRL}_1,\mathcal{H}^{IRL}_2,\ldots$, is called uniformly full rank if there exists a constant non-zero lower bound on all of the minimum singular values. More specifically,
	\begin{equation}
	0<\underline{\sigma}<\lambda_{\min}\left\{\hat{\Sigma}_s^T\hat{\Sigma}_s\right\},\forall s\in\mathbb{N},\label{def: Full Rank}
	\end{equation}
	where $\underline{\sigma}\in \mathbb{R}_{>0}.$ 
\end{defn}
\begin{remark}
	To facilitate the following Lyapunov analysis, the dynamics for the weight estimation error can be described by
	\begin{equation}
	\dot{\tilde{W}}=-\alpha\Gamma\hat{\Sigma}^T\left(\hat{\Sigma}\tilde{W}+\Delta_{\theta}\right),\label{eq:Werrordyns}
	\end{equation}
	using the fact that $\tilde{W}=W^*-\hat{W}$, along with \eqref{eq:W Dynamics} and the equation $-\Sigma_{u1}=\hat{\Sigma}W^*+\Delta_{\theta}$, where $\Delta_{\theta}$ denotes the errors resulting from poor $\hat{\theta}$ estimates incorporated in $\hat{\Sigma}$. 
\end{remark}

The stability result is summarized in the following theorem.

\begin{thm}
	Provided $\mathcal{H}^{PE}$ and $\mathcal{H}^{IRL}$ are uniformly full rank and $\tilde{d}$ converges to zero exponentially, then as $t \to \infty,$ $\Vert\tilde{W}\left(t\right)\Vert $ is uniformly ultimately bounded (UUB).
\end{thm}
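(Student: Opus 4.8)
The plan is to adapt the weighted least-squares Lyapunov argument of Theorem~\ref{TM: PE} to the error dynamics~\eqref{eq:Werrordyns}. I would take $V_W(\tilde{W},t) \coloneqq \tfrac{1}{2}\tilde{W}^{T}\Gamma^{-1}(t)\tilde{W}$, which by~\eqref{eq:NLSStaFGammaBound} satisfies $\tfrac{1}{2\overline{\Gamma}}\Vert\tilde{W}\Vert^{2} \le V_W(\tilde{W},t) \le \tfrac{1}{2\underline{\Gamma}}\Vert\tilde{W}\Vert^{2}$, and differentiate along~\eqref{eq:Werrordyns} and~\eqref{eq:WGamma Dynamics}. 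Using the identity $\dot{\Gamma}^{-1} = -\Gamma^{-1}\dot{\Gamma}\Gamma^{-1} = -\beta\Gamma^{-1} + \alpha\hat{\Sigma}^{T}\hat{\Sigma}$, the two $\alpha\hat{\Sigma}^{T}\hat{\Sigma}$ contributions partially cancel, leaving
\[ \dot{V}_W(\tilde{W},t) = -\tfrac{1}{2}\alpha\,\tilde{W}^{T}\hat{\Sigma}^{T}\hat{\Sigma}\tilde{W} - \tfrac{1}{2}\beta\,\tilde{W}^{T}\Gamma^{-1}(t)\tilde{W} - \alpha\,\tilde{W}^{T}\hat{\Sigma}^{T}\Delta_{\theta}. \]

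Next I would dispatch the indefinite cross term by Young's inequality, $\alpha\vert\tilde{W}^{T}\hat{\Sigma}^{T}\Delta_{\theta}\vert \le \tfrac{\alpha}{4}\Vert\hat{\Sigma}\tilde{W}\Vert^{2} + \alpha\Vert\Delta_{\theta}\Vert^{2}$, so that half of the first quadratic term absorbs it; then, over any interval on which the active IRL history stack belongs to the uniformly full-rank sequence of the preceding definition, \eqref{def: Full Rank} gives $\Vert\hat{\Sigma}\tilde{W}\Vert^{2} \ge \underline{\sigma}\Vert\tilde{W}\Vert^{2}$ and~\eqref{eq:NLSStaFGammaBound} gives $\tilde{W}^{T}\Gamma^{-1}(t)\tilde{W} \ge \overline{\Gamma}^{-1}\Vert\tilde{W}\Vert^{2}$, which yields $\dot{V}_W \le -c_1 V_W + \alpha\Vert\Delta_{\theta}\Vert^{2}$ with $c_1 \coloneqq 2\underline{\Gamma}\big(\tfrac{\alpha\underline{\sigma}}{4} + \tfrac{\beta}{2\overline{\Gamma}}\big) > 0$. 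On the finite initial window before the IRL stack first becomes full rank, only the forgetting term $-\tfrac{\beta}{2\overline{\Gamma}}\Vert\tilde{W}\Vert^{2}$ need be retained, and it already keeps $V_W$ bounded.

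The crux is to show $\Vert\Delta_{\theta}(t)\Vert$ is ultimately bounded by a constant. From its defining relation $-\Sigma_{u1} = \hat{\Sigma}W^{*} + \Delta_{\theta}$, the term $\Delta_{\theta}$ is made up of the mismatch caused by using $\hat{\theta}$ in place of $\theta$ inside $\hat{Y}(x,u,\hat{\theta}) = f^{o}(x,u) + \hat{\theta}^{T}\sigma(x,u)$ --- which enters $\hat{\Sigma}$ affinely, hence contributes a piece proportional to $\Vert\tilde{\theta}\Vert$ --- together with the residual neural-network reconstruction errors associated with $\epsilon_V$, $\epsilon_Q$, and $\nabla_{x}\epsilon$. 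Since the states and controls are bounded (cf.~\eqref{eq: S Bound}) and the regressors $\nabla_{x}\sigma_V$, $\sigma_Q$, $\sigma$ are continuous, this gives $\Vert\Delta_{\theta}(t)\Vert \le k_{\theta}\Vert\tilde{\theta}(t)\Vert + \overline{\epsilon}$ for constants $k_{\theta}$ (collecting trajectory bounds on $\nabla_{x}\sigma_V$, $\sigma$, and $W_V^{*}$) and $\overline{\epsilon}$ (bounding the approximation-error contribution). By Theorem~\ref{TM: PE} --- more precisely by the differential inequality~\eqref{eq: Differential V} underlying it --- $\tilde{\theta}$ is globally bounded by some $\overline{\Theta}$, so $\Vert\Delta_{\theta}(t)\Vert \le \overline{\Delta} \coloneqq k_{\theta}\overline{\Theta} + \overline{\epsilon}$ for all $t$ past the instant both history stacks are populated.

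Finally, because $V_W$ is continuous in $t$ (neither $\tilde{W}$ nor $\Gamma$ jumps when a history stack is refreshed), the scalar inequality $\dot{V}_W \le -c_1 V_W + \alpha\overline{\Delta}^{2}$ holds for almost every $t$ in the full-rank regime, and the comparison lemma gives $V_W(t) \le V_W(t_0)e^{-c_1(t-t_0)} + \alpha\overline{\Delta}^{2}/c_1$; combined with $\Vert\tilde{W}\Vert^{2} \le 2\overline{\Gamma}V_W$ this yields $\limsup_{t\to\infty}\Vert\tilde{W}(t)\Vert \le \overline{\Delta}\sqrt{2\overline{\Gamma}\alpha/c_1}$, i.e.\ $\tilde{W}$ is UUB. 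If one further wishes to mirror the purge-based refinement of Theorem~\ref{TM: PE}, letting $T_1,T_2,\ldots$ denote the IRL-stack refresh instants separated by a dwell time $\mathcal{T}>0$, the same recursion $\overline{V}_{s+1} \le \overline{V}_{s}e^{-c_1\mathcal{T}} + \alpha\overline{\Delta}_{s}^{2}/c_1$ applies and tightens the ultimate bound, but it is not needed for the UUB claim. The main obstacle is the quantification of $\Delta_{\theta}$ --- separating the vanishing parameter-error part from the irreducible neural-network part, the latter being exactly why the conclusion is ultimate boundedness rather than convergence to zero --- together with the bookkeeping coupling the two independently maintained stacks $\mathcal{H}^{PE}$ (whose purging makes $\tilde{\theta}$ only eventually small) and $\mathcal{H}^{IRL}$.
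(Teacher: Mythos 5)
Your proposal is correct and follows essentially the same route as the paper: the same weighted least-squares Lyapunov function $\tfrac{1}{2}\tilde{W}^{T}\Gamma^{-1}\tilde{W}$, the same derivative computation, and the same decomposition of $\Delta_{\theta}$ into a part proportional to $\Vert\tilde{\theta}\Vert$ (via $\tilde{\Sigma}W^{*}$) plus an irreducible approximation-error part. The only difference is at the end: you bound $\tilde{\theta}$ uniformly by $\overline{\Theta}$ and apply the comparison lemma once, which proves UUB as stated but with an ultimate bound retaining the $\overline{\Theta}$ contribution, whereas the paper runs the purging/dwell-time recursion (as in Theorem~\ref{TM: PE}) so that the $\tilde{\theta}$-dependent term $B_s$ vanishes and the asymptotic bound depends only on the neural-network reconstruction errors --- a refinement you correctly identify as optional for the UUB claim.
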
	
\begin{proof}
	Consider the positive definite candidate Lyapunov function
	\begin{equation}
	V(\tilde{W},t)=\frac{1}{2}\tilde{W}^T\Gamma^{-1}\left(t\right)\tilde{W}. \label{eq:V_estimation}
	\end{equation}
	Using the bounds in \eqref{eq:NLSStaFGammaBound}, the candidate Lyapunov function satisfies 
	\begin{equation}
	\underline{v}\left\Vert \tilde{W}\right\Vert ^{2}\leq V\left(\tilde{W},t\right)\leq\overline{v}\left\Vert \tilde{W}\right\Vert ^{2}.\label{eq:NLSCLNoXDotVBounds-1}
	\end{equation}
	where $\underline{v}:=\nicefrac{1}{2\overline{\Gamma}} $ and $\overline{v}:=\nicefrac{1}{2\underline{\Gamma}} $.
	
	The time-derivative of \eqref{eq:V_estimation} results in
	\begin{equation}
	\label{eqn:Dot_V_estimation}
	\dot{V}(\tilde{W},t) =\tilde{W}^T\Gamma^{-1}\left(t\right)\dot{\tilde{W}}+\frac{1}{2}\tilde{W}^T\dot{\Gamma}^{-1}\left(t\right)\tilde{W}.
	\end{equation}
	Using \eqref{eq:WGamma Dynamics} and \eqref{eq:Werrordyns}, along with the identity $\dot{\Gamma}^{-1}=-\Gamma^{-1}\dot{\Gamma}\Gamma^{-1}$, after simplifying the time-derivative can be expressed as
	\begin{multline*}
	\!\!\!\!\!\!\!\dot{V}(\tilde{W},t)\!=\!-\frac{1}{2}\alpha\tilde{W}^T\hat{\Sigma}^T\hat{\Sigma}\tilde{W}-\alpha\tilde{W}^T\hat{\Sigma}^T\!\Delta_{\theta}-\frac{1}{2}\beta\tilde{W}^T\Gamma^{-1}\!\!\left(t\right)\!\tilde{W}.
	\end{multline*}
	Substituting in $\hat{\Sigma}=\Sigma-\tilde{\Sigma}$
	\begin{multline*}
	\!\!\!\dot{V}(\tilde{W},t)= -\frac{1}{2}\alpha\tilde{W}^T\hat{\Sigma}^T\hat{\Sigma}\tilde{W}-\alpha\tilde{W}^T\left(\Sigma-\tilde{\Sigma}\right)^T\Delta_{\theta}\\-\frac{1}{2}\beta\tilde{W}^T\Gamma^{-1}\left(t\right)\tilde{W}.
	\end{multline*} 
	Since $\Delta_\theta=\Sigma W^*+\Delta_\epsilon-\hat{\Sigma}W^*$, substituting in and simplifying yields
	\begin{multline*}
	\!\!\!\dot{V}(\tilde{W},t)= -\frac{1}{2}\alpha\tilde{W}^T\hat{\Sigma}^T\hat{\Sigma}\tilde{W}-\frac{1}{2}\beta\tilde{W}^T\Gamma^{-1}\left(t\right)\tilde{W}\\-\alpha\tilde{W}^T\Sigma^T\tilde{\Sigma}W^*\!+\alpha\tilde{W}^T\tilde{\Sigma}^T\tilde{\Sigma}W^*\!\!-\alpha\tilde{W}^T\Sigma^T\Delta_\epsilon+\alpha\tilde{W}^T\tilde{\Sigma}^T\Delta_\epsilon.
	\end{multline*} 
	Using the Cauchy-Schwartz inequality, and bounds in (\ref{eq:NLSStaFGammaBound}) and \eqref{def: Full Rank}, $\dot{V}$ can be bounded by
	\begin{multline}
	\label{eq:Dot_V_estimation_bound2}
	\!\!\!\!\!\!\dot{V}(\tilde{W},t)\!\leq\! -\frac{1}{2}\!\left(\alpha\underline{\sigma}+\frac{1}{\overline{\Gamma}}\beta\right)\!\left\Vert\tilde{W}\right\Vert^2\!\!+\alpha\left\Vert\tilde{W}\right\Vert\left\Vert\Sigma\right\Vert\left\Vert\tilde{\Sigma}\right\Vert\left\Vert W^*\right\Vert\\+\alpha\left\Vert\tilde{W}\right\Vert\left\Vert\tilde{\Sigma}\right\Vert^2\left\Vert W^*\right\Vert+\alpha\left\Vert\tilde{W}\right\Vert\left\Vert\Sigma\right\Vert\left\Vert\Delta_{\epsilon}\right\Vert\\+\alpha\left\Vert\tilde{W}\right\Vert\left\Vert\tilde{\Sigma}\right\Vert\left\Vert\Delta_{\epsilon}\right\Vert.
	\end{multline}
	Based on the linearly parameterized reward weights, the norm of the resulting error term $\Delta_{\epsilon}$ can be expressed as
	\[
	\!\!\!\!\overline{\Delta}_{\epsilon}:=\left(N\max_{\substack{ x \in x\left(\cdot\right) \\ u \in u\left(\cdot\right)}}\left\{\epsilon_V^2\left(x\right)+\epsilon_Q^2\left(x\right)+\epsilon_u^2\left(u\right)\right\}\right)^{\nicefrac{1}{2}}.
	\]
	Using this upper bound, $\dot{V}$ becomes
	\begin{multline}
	\!\!\dot{V}(\tilde{W},t)\leq -\frac{1}{2}\left(\alpha\underline{\sigma}+\frac{1}{\overline{\Gamma}}\beta\right)\left\Vert\tilde{W}\right\Vert^2\\+\alpha\overline{\Delta}_\epsilon\left\Vert\tilde{W}\right\Vert\left\Vert{\Sigma}\right\Vert+\alpha\overline{\Delta}_\epsilon\left\Vert\tilde{W}\right\Vert\left\Vert\tilde{\Sigma}\right\Vert\\+\alpha\left\Vert\tilde{W}\right\Vert\left\Vert\Sigma\right\Vert\left\Vert\tilde{\Sigma}\right\Vert\left\Vert W^*\right\Vert+\alpha\left\Vert\tilde{W}\right\Vert\left\Vert\tilde{\Sigma}\right\Vert^2\left\Vert W^*\right\Vert.
	\end{multline}
	The term $\left\Vert\tilde{\Sigma}\right\Vert$ can be expressed in terms of $\tilde{\theta}$ as
	\begin{equation}
	\left\Vert\tilde{\Sigma}\right\Vert\leq\left\Vert\tilde{\theta}\right\Vert \overline{\Sigma}, \label{eq: Sigma Tilde Bound}
	\end{equation}
	where
	\begin{equation}
	\overline{\Sigma}:=N\max_{\substack{ x \in x\left(\cdot\right) \\ u \in u\left(\cdot\right)}}\{\left\Vert\nabla_x\sigma_{V}(x)\right\Vert\left\Vert\sigma(x,u)\right\Vert\}.
	\end{equation}
	The term $\left\Vert\Sigma\right\Vert$, which contains true values of the unknown parameters, is bounded above since it is a function of only true parameters, $\theta$, and bounded states and controls, $x$ and $u$. Let the upper bound on $\left\Vert\Sigma\right\Vert$ be denoted as
	\begin{equation}\label{eq: Sigma Bound}
	 \left\Vert\Sigma\right\Vert\leq\overline{\Sigma}_\theta.
	\end{equation}
	
	Using \eqref{eq: Sigma Tilde Bound} and \eqref{eq: Sigma Bound}, $\dot{V}$ becomes
	\begin{multline}
	\!\!\dot{V}(\tilde{W},t)\leq -\frac{1}{2}\left(\alpha\underline{\sigma}+\frac{1}{\overline{\Gamma}}\beta\right)\left\Vert\tilde{W}\right\Vert^2+\alpha\overline{\Delta}_\epsilon\overline{\Sigma}_\theta\left\Vert\tilde{W}\right\Vert\\+\alpha\overline{\Delta}_\epsilon\overline{\Sigma}\left\Vert\tilde{W}\right\Vert\left\Vert\tilde{\theta}\right\Vert+\alpha\overline{\Sigma}_\theta\overline{\Sigma}\left\Vert W^*\right\Vert\left\Vert\tilde{W}\right\Vert\left\Vert\tilde{\theta}\right\Vert\\+\alpha\overline{\Sigma}^2\left\Vert W^*\right\Vert\left\Vert\tilde{W}\right\Vert\left\Vert\tilde{\theta}\right\Vert^2.
	\end{multline}
	Using Young's Inequality $\dot{V}$ then becomes
	\begin{multline}
	\label{eq:Dot_V_estimation_Final bound}
	\!\!\dot{V}(\tilde{W},t)\leq  -\frac{1}{8}\left(\alpha\underline{\sigma}+\frac{1}{\overline{\Gamma}}\beta\right)\left\Vert\tilde{W}\right\Vert^2\\\!\!\!+\frac{2\left(\alpha\overline{\Delta}_\epsilon\overline{\Sigma}_\theta\overline{\tilde{\theta}}+\alpha\overline{\Sigma}_\theta\overline{\Sigma}\left\Vert W^*\right\Vert\overline{\tilde{\theta}}+\alpha\overline{\Sigma}^2\left\Vert W^*\right\Vert\overline{\tilde{\theta}}^2\right)^2}{\alpha\underline{\sigma}+\nicefrac{\beta}{\overline{\Gamma}}}\\+\frac{2(\alpha\overline{\Delta}_\epsilon\overline{\Sigma}\left\Vert\theta\right\Vert)^2}{\alpha\underline{\sigma}+\nicefrac{\beta}{\overline{\Gamma}}},
	\end{multline}
	where $\overline{\tilde{\theta}}$ denotes bounded $\tilde{\theta}$ values stored in the history stack, $\mathcal{H}^{IRL}$.
	Using the bound in \eqref{eq:NLSCLNoXDotVBounds-1}, the differential inequality for $\dot{V}$ can be expressed as
	\begin{equation}
	\label{eq: Differential V of Z}
	\!\!\dot{V}(\tilde{W},t)\leq -AV\left(\tilde{W},t\right)+B+C,
	\end{equation}
	where
	\begin{equation*}
	A:=\frac{1}{8\overline{v}}\left(\alpha\underline{\sigma}+\frac{1}{\overline{\Gamma}}\beta\right),
	\end{equation*}
	\begin{equation}
	B:=\frac{2\left(\alpha\overline{\Delta}_\epsilon\overline{\Sigma}_\theta\overline{\tilde{\theta}}+\alpha\overline{\Sigma}_\theta\overline{\Sigma}\left\Vert W^*\right\Vert\overline{\tilde{\theta}}+\alpha\overline{\Sigma}^2\left\Vert W^*\right\Vert\overline{\tilde{\theta}}^2\right)^2}{\alpha\underline{\sigma}+\nicefrac{\beta}{\overline{\Gamma}}},
	\end{equation} and
	\begin{equation}
	C:=\frac{2(\alpha\overline{\Delta}_\epsilon\overline{\Sigma}\left\Vert\theta\right\Vert)^2}{\alpha\underline{\sigma}+\nicefrac{\beta}{\overline{\Gamma}}}.
	\end{equation}
	Due to the fact that $\hat{\Sigma}$ and $\Delta'$ depend on the quality of the parameter estimates, a purging technique was incorporated in an attempt to remove poor estimates $\hat{\theta}$ from $\mathcal{H}^{IRL}.$ During the transient phase of the parameter estimator, the estimates $\hat{\theta}$ are less accurate and the resulting values of $\hat{W}$ will be poor. Purging facilitates usage of better estimates as they become available.
	
		Due to purging of $\mathcal{H}^{IRL}$, the estimator is analyzed over discrete time instances. Define the purging instances as $T_1,T_2,\ldots$, and maintain a minimum dwell time, $\mathcal{T}$, such that $T_{s+1}-T_s\geq\mathcal{T}>0, \ \forall s\in\mathbb{N}$.
	
	Solving equation $\eqref{eq: Differential V of Z}$ over any time interval $[T_s,T_{s+1})$, yields
	\begin{equation}
	\overline{V}_{s+1}\leq \overline{V}_se^{-A\left(t-T_s\right)}+\frac{B_s}{A}+\frac{C}{A},
	\end{equation}
	where $\overline{V}_s\geq\left\Vert V\left(\tilde{W}\left(T_{s}\right),T_{s}\right)\right\Vert$ and $B_{s+1}$ denotes the value of $B$ over interval $[T_s,T_{s+1})$. 
	A similar argument as the proof of Theorem \ref{TM: PE} can be used to conclude that
	\begin{equation}
	\lim\limits_{s \to \infty}\sup\overline{V}_s\leq\frac{32\overline{v}(\alpha\overline{\Delta}_\epsilon\overline{\Sigma}\left\Vert\theta\right\Vert)^2}{\left(\alpha\underline{\sigma}+\nicefrac{\beta}{\overline{\Gamma}}\right)^2},
	\end{equation}and as a result  $\lim\limits_{s\to\infty}\sup\left\Vert\tilde{W}\left(T_s\right)\right\Vert\leq\sqrt{2\frac{\overline{v}}{\underline{v}}}\frac{4(\alpha\overline{\Delta}_\epsilon\overline{\Sigma}\left\Vert\theta\right\Vert)}{\left(\alpha\underline{\sigma}+\nicefrac{\beta}{\overline{\Gamma}}\right)}$.
\end{proof}
		
	\section{Simulation}\label{sec:Sim}
	To demonstrate the performance of the developed method, a nonlinear optimal control problem was constructed using \cite{SCC.Kalman1964} in order to have a known value function for comparison. 
	
	Agent 1 has the following nonlinear dynamics
	\begin{equation}
	\dot{x}_{1_1}= x_{1_2}\nonumber, \ \ \ \ \ \ \ \ 
	\dot{x}_{1_2}= x_{1_1}x_{1_2}+3x_{1_2}^2+5u_1+d.
	\label{eq: Linear Simulation}
	\end{equation}
	Agent 2 under observation has the following nonlinear dynamics
	\begin{align}
	\dot{x}_{2_1}&= x_{2_2},\nonumber\\
	\ \ \dot{x}_{2_2}&= \theta_1x_{2_1} \Big(\frac{\pi}{2}+\tan^{-1}(5x_{2_1})\Big)+\frac{\theta_2x_{2_1}^2}{1+25x_{2_1}^2} \nonumber \\ & \ \ \ \ \ \ \ \  \ \ \ \ \ \ \ \ \ \ \ \ \  \ \ \ \ \ \  \ \ \ \  \ \ \  +\theta_3x_{2_2}+3u_2+d,
	\label{eq: NonLinear Simulation}
	\end{align}
	where $x_{A_B}$ denotes state B for Agent A. The parameters $\theta_1,\theta_2,$ and $\theta_3$ are unknown constants to be estimated and $d$ is the unknown disturbance. The exact values of these parameters are $\theta_1=-1,\theta_2=-\frac{5}{2},$ and $\theta_3=4$. The disturbance, $d$, acting on the agents is generated from the linear system in Section \ref{sec:Dist_Estimation}, where $A=[0,1;-1,0]$ and $C=[0,0;1,0]$, and the chosen gain matrix was $K=[1,0.5;0,5].$
%
	
	The performance index that the agent is trying to minimize is
	\begin{equation*}
	J(x_0,u_2(\cdot)) = \int_{0}^{\infty}(x_{2_2}^2+u_2^2)dt,
	\end{equation*}
	resulting in the reward function weights to be estimated as $Q=\mathrm{diag}(q_1,q_2)=\mathrm{diag}(0,1)$ and $R=1$. The observed state and control trajectories, and the disturbance estimates are used in the estimation of unknown parameters in the dynamics, along with the optimal value function parameters and the reward function weights. The optimal controller is $u_2^*=-3x_{2_2}$, while the optimal value function is $V^*=x_{2_1}^2(v_1+v_2\tan^{-1}(5x_{2_1}))+v_3x_{2_2}^2$,
%
	resulting in the ideal function parameters $v_1=\frac{\pi}{2},$ $v_2=1$, and $v_3=1$.
	\begin{figure}[htbp!]
		\includegraphics[width=1\columnwidth]{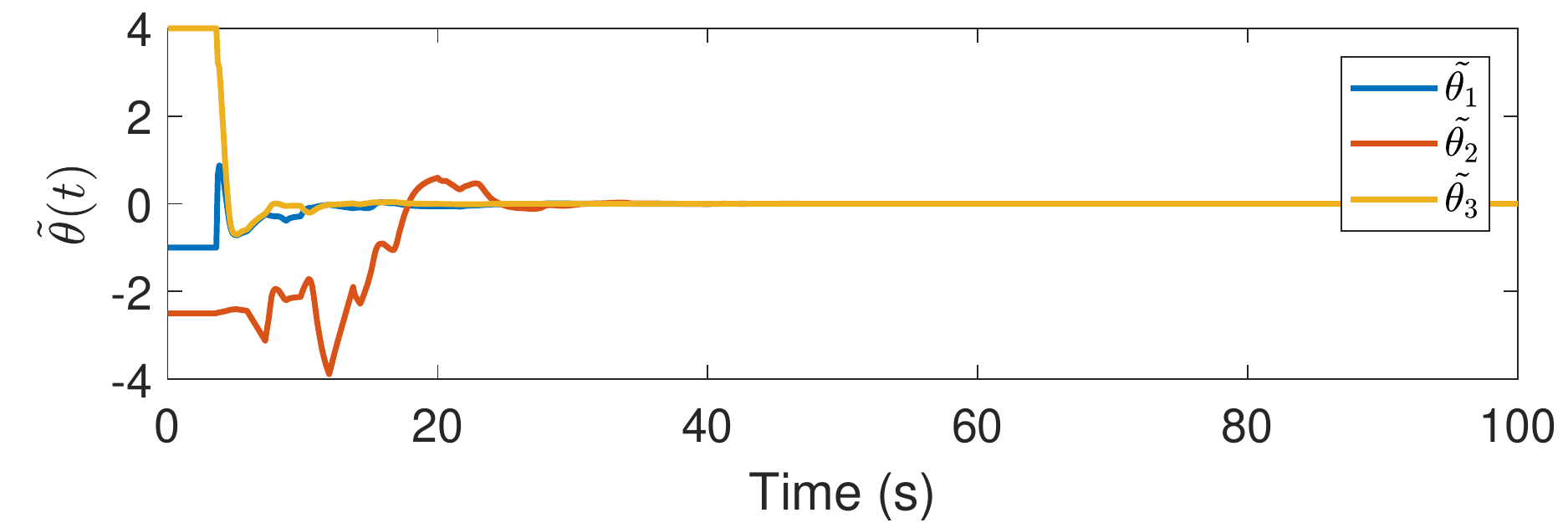}
		\caption{Estimation error for the unknown parameters in Agent 2's dynamics.}
		\label{fig: Parameter Estimation}
	\end{figure}
	\begin{figure}[htbp!]
		\includegraphics[width=1\columnwidth]{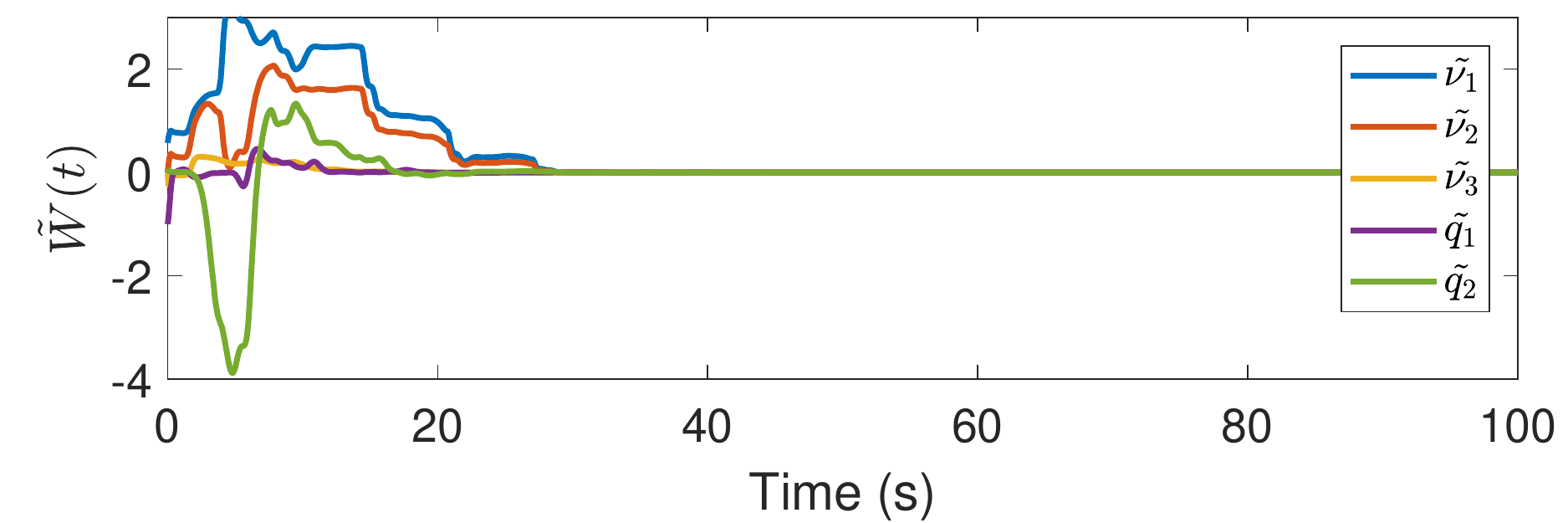}
		\caption{Estimation error for the unknown parameters in the reward function for Agent 2.}
		\label{fig: Cost Function Estimation}
	\end{figure}
		\begin{figure}[htbp!]
			\includegraphics[width=1\columnwidth]{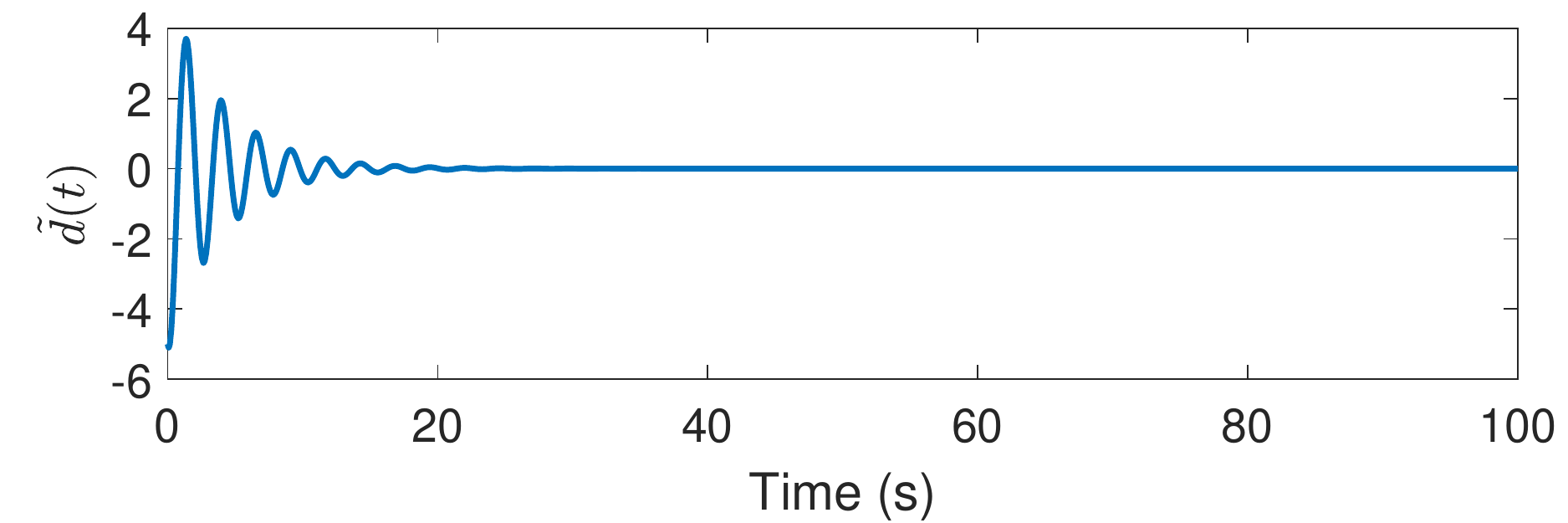}
			\caption{Estimation error for the unknown disturbance acting on the two agents.}
			\label{fig: Dist Estimation}
		\end{figure}
	Figs. \ref{fig: Parameter Estimation} and \ref{fig: Cost Function Estimation} show the performance of the proposed method. Fig. \ref{fig: Parameter Estimation} shows convergence of the unknown part of Agent 2's dynamics, and Fig. \ref{fig: Cost Function Estimation} shows convergence of the unknown reward function. Fig. \ref{fig: Dist Estimation} shows the convergence of the disturbance estimates. The parameters used for the simulation are: $T = 1.2 s$, $N = 100$, $M = 150$, $\beta=\beta_{\theta} = 0.5$, $\alpha=\alpha_{\theta} = \nicefrac{1}{N}$, and a time step of $0.0005s$.
	\section{Conclusion} \label{sec:Conclusion}
	A novel IRL framework is developed in this paper for reward function estimation in the presence of modeling errors and additive disturbances. To compensate for disturbance-induced sub-optimality of observed trajectories, a model-based approach is developed that relies on a disturbance estimator. 
	
	Future work will focus on the development of output feedback IRL methods that utilize both state and parameter estimation methods, and extensions of the developed method for disturbances that affect the agents through a control effectiveness matrix. The authors will additionally explore the use of implicit disturbance estimation techniques that would result in bounded disturbance estimation errors.
	\bibliographystyle{ieeetran}
	\bibliography{scc,sccmaster}
\end{document}